\definecolor{gw}{rgb}{0.97, 0.97, 1.0}
\theoremstyle{plain}
\newtheorem{theorem}{Theorem}[section]
\newtheorem{corollary}[theorem]{Corollary}
\newtheorem{remark}[theorem]{Remark}
\title{Independence Testing for Mixed Data}
\author{ 
\href{https://orcid.org/0000-0002-1999-969X}{\includegraphics[scale=0.06]{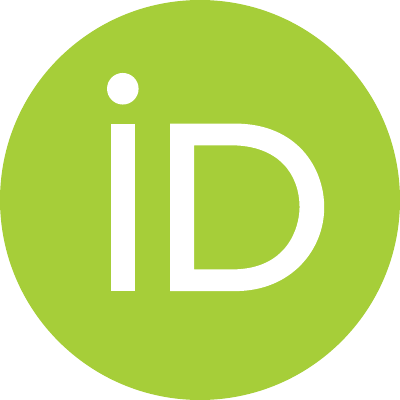}\hspace{1mm}Dana Bucalo Jeli\'c} \\
	University of Belgrade\\
    PhD student at the Faculty of Mathematics\\
	Belgrade, 11000, Serbia \\
	\texttt{bucalo@agrif.bg.ac.rs} \\
 \And
 \href{https://orcid.org/0000-0001-5071-8350}{\includegraphics[scale=0.06]{orcid.pdf}\hspace{1mm} Marija Cupari\' c} \\
	University of Belgrade\\
	Faculty of Mathematics\\
	Belgrade, 11000, Serbia \\
	\texttt{marija.cuparic@matf.bg.ac.rs} \\
	\And
    	\href{https://orcid.org/0000-0001-8243-9794}{\includegraphics[scale=0.06]{orcid.pdf}\hspace{1mm}Bojana Milo\v sevi\' c} \\
	University of Belgrade\\
        Faculty of Mathematics\\
	Belgrade, 11000, Serbia \\
	\texttt{bojana@matf.bg.ac.rs} \\
}
\date{}
\begin{document}
\maketitle

\begin{abstract}
We consider the problem of testing independence in mixed-type data that combine count variables with positive, absolutely continuous variables. We first introduce two distinct classes of test statistics in the bivariate setting, designed to test independence between the components of a bivariate mixed-type vector. These statistics are then extended to the multivariate context to accommodate: (i) testing independence between vectors of different types and possibly different dimensions, and (ii) testing total independence among all components of vectors with different types. The construction is based on the recently introduced Baringhaus–Gaigall transformation, which characterizes the joint distribution of such data. We establish the asymptotic properties of the resulting tests and, through an extensive power study, demonstrate that the proposed approach is both competitive and flexible.
\end{abstract}

\keywords{integral transforms \and total independence \and limiting properties}

\section{Introduction}

Independence plays a central role in statistics and many related fields, underpinning much of statistical theory.
Since then, numerous new measures of dependence have been developed and studied, each with its own advantages and disadvantages, and no single measure has been universally accepted. For an overview we refer to \cite{samuel2001correlation} and \cite{josse2016measuring}. Alongside these developments, many statistical tests for assessing independence have also been proposed. 
Independence between categorical outcomes is often tested by classical tests such as Pearson’s $\chi^2$ or the likelihood ratio statistic (see \cite{agresti2013categorical} and \cite{berrett2021usp}). There are many tests based on empirical distribution function (see e.g. \cite{hoeffding1994non,blum1961distribution,de1980cramer,genest2004test}), those utilizing empirical copula (see e.g. \cite{GNR19}) or integral transforms (see e.g. \cite{csorgHo1985testing, GFL19,bucalo2024testing}), mostly designed for ordinal categorical data,  continuous or discrete data. Since it was shown to be equivalent to the $L^2$-weighted difference between empirical characteristic functions, the notion of distance covariance and its generalizations has become very popular for testing independence (see \cite{SRB07,sejdinovic2013equivalence,edelmann2021relationships,cuparic2024new,edelmann2025generalized}). The main reason is its flexibility with respect to the types of data to which the statistic can be applied. All those tests were proposed against a general alternative of non-independence.
Some of the proposed tests were specifically designed to test against special notions of stochastic dependence (\cite{kochar1990distribution, shetty2003distribution} 
against positive quadrant dependence, \cite{amini2020new} against positive regression dependence etc.). We address two related problems: testing independence between mixed-type vectors and testing total independence. For methodological clarity, we begin with the bivariate setting and subsequently present the corresponding multivariate extensions.

Let $(X, Y)$ be a bivariate random vector such that $X > 0$ and $Y$ takes values in $\mathbb{N}_0 = \mathbb{N} \cup \{0\}$, where $\mathbb{N}$ denotes the set of natural numbers. We assume (and throughout the paper will maintain) that $X = 0$ if and only if $Y = 0$. Consequently, $(X, Y)$ takes values in the Borel subset
$
R = ((0, \infty) \times \mathbb{N}) \cup \{(0, 0)\}
$
of $\mathbb{R}^2$. We further assume that $\mathbb{E}(X) < \infty$ and $\mathbb{E}(Y) < \infty$.

Let $\mathbb{R}_{+}$ be the set of non-negative real numbers. The function $\psi: \mathbb{R}_{+} \times [0,1] \to \mathbb{R}_{+}$ is defined by (see \cite{LB23})
$$ \psi(s,t) = E(e^{-sX} t^Y), \; s \geq 0, \; 0 \leq t \leq 1. $$

Let $L(s)$ denote the Laplace transform of a random variable $X$, defined by 
$ L(s) = E(e^{-sX})$, and let $G(t)$ denote the probability generating function of a count random variable $Y$, defined by $G(t) = E(t^Y)$. Notice that $L(s)=\psi(s,1)$ and
$G(t)=\psi(0,t)$. Random variable $X$ and $Y$ are independent if and only if, for each $ s \geq 0, \; 0 \leq t \leq 1$ we have 
$$\psi(s,t)=
L(s) G(t). $$
Let $(X_1,Y_1),...,(X_n,Y_n)$ be independent identically distributed (i.i.d.) bivariate sample equally distributed as $(X,Y)$. Denote the empirical counterparts of previous functions with
\begin{align*}
\psi_n(s,t)&=\frac{1}{n} \sum\limits_{i=1}^{n} e^{-sX_{i}}t^{Y_i},\\
L_n(s) &= \frac{1}{n}\sum\limits_{i=1}^{n} e^{-sX_{i}},\\ 
G_n(t) &= \frac{1}{n}\sum\limits_{i=1}^{n} t^{Y_{i}}.
\end{align*}
Considering this, for testing $H_0:$ that $X$ and $Y$ are independent, against that is not the case, we propose a class of test statistics
\begin{align}\label{testT}
    T_n =& \int_0^{\infty}\int_0^1(\psi_n(s,t) - L_n(s) G_n(t))^2w(s,t)dsdt,
\end{align}
where $w:\mathbb{R}_+\times [0,1] \to\mathbb{R}_{+}$  is a weight function such that $\int_0^{\infty}\int_0^1w(s,t)dsdt<\infty$. Following \cite{LB23} we choose $w(s,t)=t^b e^{-a s}$, where $a,b >0$. Then the test statistic $T_n$ has a simple form
\begin{align*}
    T_n
   =& \frac{1}{n^2} \sum_{i=1}^{n} \sum_{j=1}^{n} \frac{1}{(X_{i} + X_{j}+a)(Y_{i} + Y_{j} +b  + 1)}\\
    &+\frac{1}{n^{4}}\sum_{i=1}^{n} \sum_{j=1}^{n} \frac{1}{(X_{i} + X_{j}+a)}\sum_{i=1}^{n} \sum_{j=1}^{n}\frac{1}{(Y_{i} + Y_{j} +b  + 1)}\\
    &-2\frac{1}{n^{3}} \sum_{i=1}^{n} \sum_{j=1}^{n}\sum_{l=1}^{n}
 \frac{1}{(X_{i} + X_{j}+a)(Y_{i} + Y_{l}  
 +b + 1)}.
 \end{align*}
 It is clear that, under $H_0$, the values of $T_n$ will be small. Therefore, the null hypothesis will be rejected for large values of $T_n$.
 
One can also consider
 \begin{align}\label{testIn}
 I_{n}=&\int_0^{\infty}\int_0^1(\psi_{n}(s, t) - L_n(s) G_n(t))w(s,t)dsdt.
\end{align}
After calculation, for the previous choice of weight function, we have
 \begin{align*} 
I_n&  =
  \frac{1}{n}\sum_{i=1}^{n} \frac{1}{(X_i + a)(Y_i + b  + 1)}- \frac{1}{n} \sum_{i=1}^{n}\frac{1}{(X_i +a)}\frac{1}{n}\sum_{j=1}^{n}\frac{1}{(Y_j +b  + 1)}.
\end{align*}

Notice that $I_n$ can be close to 0 also in cases where the null hypothesis does not hold. However, it usually does not occur for the majority of alternatives, and hence we may assume large values of $\mid I_n\mid$ to be significant. Moreover, its simple form and nice properties, which we will show later, make it worth considering. 

In the next section, we derive the asymptotic null distribution of the introduced statistics. These results may be helpful for approximating the $p$-values. Then, in Section~\ref{sec: general}, we go beyond the bivariate case by considering the problem of testing the independence of two vectors of arbitrary dimensions, as well as the problem of testing total independence. Accompanying limiting results are also presented. Section~\ref{sec:power} contains the results of a comprehensive power study, in which the competitiveness of the tests is demonstrated across various settings. A recommendation for the selection of tuning parameters is also provided. The applicability of the proposed statistics is further demonstrated in Section~\ref{sec: real}, where they are applied to real data sets.

\section{Asymptotic properties of $T_n$ and $I_n$}

Let us consider Hilbert space $\mathbb{H}_w=L^2(\mathbb{R}^+\times[0,1])$, where scalar product $\langle,\rangle_w$ is defined by $\langle f,g\rangle_w=\int_0^\infty\int_0^1 f(s,t)g(s,t)w(s,t)dsdt$, and $\|\cdot\|_w$ is the corresponding $L^2$ norm. Then $T_n$ and $I_n$ can be represented as
\begin{align*}
    T_n&=\| \xi_n\|^2_w,\\
    I_n&=\langle \xi_n,1\rangle_w,
\end{align*}
where $\xi_n\in \mathbb{H}_w$ is defined with  
\begin{align*}
    \xi_n(s,t)=\frac{1}{n} \sum\limits_{i=1}^{n} e^{-sX_{i}}t^{Y_i}-\frac{1}{n^2}\sum\limits_{i=1}^{n} e^{-sX_{i}}\sum\limits_{i=1}^{n}t^{Y_i}.
\end{align*}

Now we have all ingredients to formulate results about almost sure convergence and null distributions of $T_n$ and $I_n$.  
\begin{theorem}\label{as}
    Let $X_1,X_2,...,X_n$ be an i.i.d. sample from a nonnegative continuous random variable $X$, and let $Y_1,Y_2,...,Y_n$ be i.i.d. sample from nonnegative count random variable $Y$, both having finite second moments. Then
    \begin{itemize}
        \item $T_n \overset{a.s}{\to}\int_0^{\infty}\int_0^1 (\psi(s,t)-L(s)G(t))^2
 w(s,t)dtds=\|\psi-LG\|^2_w$,
        \item $I_n \overset{a.s}{\to}\int_0^{\infty}\int_0^1 (\psi(s,t)-L(s)G(t))
 w(s,t)dtds=\langle \psi-LG, 1 \rangle_w$.
    \end{itemize}
\end{theorem}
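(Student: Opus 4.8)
The plan is to prove both claims at once by first establishing the stronger statement that $\xi_n$ converges to $\xi := \psi - LG$ in $\mathbb{H}_w$ almost surely, i.e. $\|\xi_n - \xi\|_w \to 0$ a.s., and then invoking continuity of the norm and of the inner product. Once this is available, the reverse triangle inequality gives $\big|\,\|\xi_n\|_w - \|\xi\|_w\,\big| \le \|\xi_n - \xi\|_w \to 0$, and since $T_n = \|\xi_n\|_w^2$ this yields $T_n \to \|\xi\|_w^2 = \|\psi - LG\|_w^2$ a.s.; for the second claim, Cauchy--Schwarz gives $|I_n - \langle \xi, 1\rangle_w| = |\langle \xi_n - \xi, 1\rangle_w| \le \|\xi_n - \xi\|_w\,\|1\|_w$, and $\|1\|_w^2 = \int_0^\infty\int_0^1 w(s,t)\,ds\,dt < \infty$ by assumption, so $I_n \to \langle \psi - LG, 1\rangle_w$ a.s.

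It therefore remains to prove $\|\xi_n - \xi\|_w \to 0$ a.s. I would decompose
\[
\xi_n - \xi = (\psi_n - \psi) - (L_nG_n - LG) = (\psi_n - \psi) - L_n(G_n - G) - (L_n - L)G,
\]
so that, by the triangle inequality,
\[
\|\xi_n - \xi\|_w \le \|\psi_n - \psi\|_w + \|L_n(G_n - G)\|_w + \|(L_n - L)G\|_w.
\]
Because $w$ factorizes as $w(s,t) = e^{-as}t^b$ and because $0 \le L_n(s) \le 1$ and $0 \le G(t) \le 1$, each of the last two norms factors into a product of one-dimensional weighted integrals and is controlled by $\big(\int_0^\infty e^{-as}\,ds\big)^{1/2}\|G_n - G\|$ and $\|L_n - L\|\,\big(\int_0^1 t^b\,dt\big)^{1/2}$ respectively, where the one-dimensional norms use the marginal weights $e^{-as}\,ds$ on $(0,\infty)$ and $t^b\,dt$ on $[0,1]$. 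Hence it suffices to show that each of $\|\psi_n - \psi\|_w$, $\|L_n - L\|$ and $\|G_n - G\|$ tends to $0$ almost surely.

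Each of these three statements has the same structure, so I describe the argument for $\|G_n - G\|^2 = \int_0^1 (G_n(t) - G(t))^2 t^b\,dt$. For every fixed $t \in [0,1]$ the summands $t^{Y_i}$ are i.i.d. and bounded by $1$, so the strong law of large numbers gives $G_n(t) \to G(t)$ almost surely. The main obstacle is that this convergence holds off a null set that depends on $t$, and one must upgrade it to almost sure convergence of the integral. This is handled by a Tonelli argument on the product space $\Omega \times [0,1]$ equipped with $P \otimes (t^b\,dt)$: the set on which $G_n(t;\omega) \to G(t)$ has full $P$-measure for each $t$, hence full product measure, hence for $P$-almost every $\omega$ the convergence holds for $t^b\,dt$-almost every $t$. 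For such $\omega$, the integrand $(G_n(t) - G(t))^2 t^b$ is bounded by $4t^b$, which is integrable on $[0,1]$, so dominated convergence yields $\|G_n - G\|^2 \to 0$ a.s. The same SLLN--Tonelli--dominated-convergence scheme applies verbatim to $\|L_n - L\|^2 = \int_0^\infty (L_n(s) - L(s))^2 e^{-as}\,ds$ (dominating function $4e^{-as}$) and to $\|\psi_n - \psi\|_w$ (dominating function $4w$, using $|e^{-sX_i}t^{Y_i}| \le 1$). Combining the three estimates gives $\|\xi_n - \xi\|_w \to 0$ a.s., which completes the proof.
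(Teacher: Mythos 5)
Your proof is correct, and its skeleton---establish $\|\xi_n-(\psi-LG)\|_w\to 0$ almost surely, then deduce the claim for $T_n$ by continuity of the norm and for $I_n$ by Cauchy--Schwarz---is exactly the paper's. Where you differ is in the key step: the paper disposes of the a.s.\ convergence in one line by invoking the strong law of large numbers in the Hilbert space $\mathbb{H}_w$, whereas you prove it by hand via the decomposition $\xi_n-\xi=(\psi_n-\psi)-L_n(G_n-G)-(L_n-L)G$, the bounds $0\le L_n\le 1$ and $0\le G\le 1$, and the scalar SLLN combined with a Tonelli argument and dominated convergence for each of the three pieces. This buys you two things. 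First, your argument is self-contained: it needs only the classical one-dimensional SLLN rather than the Hilbert-space (Mourier) law of large numbers. Second, and more substantively, it repairs a small imprecision in the paper: $\xi_n=\psi_n-L_nG_n$ is not itself a sample mean of i.i.d.\ $\mathbb{H}_w$-valued elements (it contains a product of two empirical means), so the Hilbert-space SLLN does not apply to it verbatim; one must first split off the product term exactly as you do (or argue through the V-statistic structure), a step the paper leaves implicit. What the paper's route buys in exchange is brevity and generality in the weight: your bounding step uses the factorized form $w(s,t)=e^{-as}t^b$, while the abstract argument works for any integrable $w$ --- though your scheme adapts immediately to general $w$ by keeping $L_n^2\le 1$ inside the joint integral instead of factorizing, with dominating function $4w$ throughout.
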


\begin{proof}[\bf  Proof of Theorem \ref{as}]
Since $\xi_n(s,t)$ is an element of Hilbert space $\mathbb{H}_w$, by the strong law of large numbers in the Hilbert space $\mathbb{H}_w$, $\| \xi_n-(\psi-LG)\|_w\overset{a.s.}{\to}0,n\to\infty$, and therefore $T_n=\| \xi_n\|_w^2\overset{a.s.}{\to}\|\psi-LG\|^2_w$. On the other hand, by applying Cauchy–Schwarz inequality, $\mid I_n-\langle \psi-LG,1\rangle_w\mid^2\leq \|\xi_n-(\psi-LG)\|^2_w$. Therefore, $I_n\overset{a.s.}{\to}\langle \psi-LG, 1 \rangle_w$.
\end{proof}

\begin{theorem}\label{proc}
    Let the null hypothesis hold. Then there exists a centered Gaussian process $\xi\in\mathbb{H}_w$ with covariance function 
    \begin{align}\label{cov}
        K((s_1,t_1),(s_2,t_2)) 
                            =&\big(G(t_1t_2)-G(t_1)G(t_2)\big)\big(L(s_1+s_2)-L(s_1)L(s_2)\big)
    \end{align}
    such that $\sqrt{n}\xi_n$ weakly converges to $\xi$. 
\end{theorem}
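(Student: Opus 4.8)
The plan is to linearize $\sqrt{n}\,\xi_n$ around the population transforms and reduce the assertion to a central limit theorem for i.i.d.\ random elements of the separable Hilbert space $\mathbb{H}_w$. First I would decompose $\xi_n=(\psi_n-\psi)-(L_nG_n-LG)+(\psi-LG)$ and note that under $H_0$ the last bracket vanishes identically. Writing out the product term through the algebraic identity $L_nG_n-LG=(L_n-L)G+L(G_n-G)+(L_n-L)(G_n-G)$, I obtain
\[
\sqrt{n}\,\xi_n=\frac{1}{\sqrt{n}}\sum_{i=1}^n Z_i-R_n,
\]
where $R_n(s,t)=\sqrt{n}\,(L_n(s)-L(s))(G_n(t)-G(t))$ and, after substituting $\psi=LG$ and cancelling, the summands collapse to the product form
\[
Z_i(s,t)=\bigl(e^{-sX_i}-L(s)\bigr)\bigl(t^{Y_i}-G(t)\bigr).
\]
The $Z_i$ are i.i.d.\ and centered, since each factor has mean zero, and they are genuine elements of $\mathbb{H}_w$.

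Next I would verify the square-integrability required by the Hilbert-space CLT. Because $e^{-sX_i},\,L(s),\,t^{Y_i},\,G(t)$ all lie in $[0,1]$ for $s\ge 0$ and $t\in[0,1]$, the integrand defining $\|Z_i\|_w^2$ is dominated by $w(s,t)$, so that $E\|Z_1\|_w^2\le\int_0^\infty\int_0^1 w(s,t)\,ds\,dt<\infty$ by the standing assumption on $w$. The central limit theorem for i.i.d.\ centered square-integrable random elements of a separable Hilbert space then gives that $\tfrac{1}{\sqrt n}\sum_i Z_i$ converges weakly in $\mathbb{H}_w$ to a centered Gaussian element $\xi$ whose covariance operator is induced by $K((s_1,t_1),(s_2,t_2))=E[Z_1(s_1,t_1)Z_1(s_2,t_2)]$. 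Using that $X$ and $Y$ are independent under $H_0$, this expectation factorizes into $E[(e^{-s_1X}-L(s_1))(e^{-s_2X}-L(s_2))]\cdot E[(t_1^Y-G(t_1))(t_2^Y-G(t_2))]$, which equals $(L(s_1+s_2)-L(s_1)L(s_2))(G(t_1t_2)-G(t_1)G(t_2))$, matching \eqref{cov}.

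It then remains to show that the remainder is asymptotically negligible, $\|R_n\|_w\overset{P}{\to}0$; by Slutsky's lemma in $\mathbb{H}_w$ this transfers the limit to $\sqrt{n}\,\xi_n$. Exploiting the factorized weight $w(s,t)=t^b e^{-as}$, I would write
\[
\|R_n\|_w^2=n\Bigl(\int_0^\infty (L_n(s)-L(s))^2 e^{-as}\,ds\Bigr)\Bigl(\int_0^1 (G_n(t)-G(t))^2 t^b\,dt\Bigr).
\]
Each factor in parentheses is $O_P(1/n)$, because $\sqrt{n}(L_n-L)$ and $\sqrt{n}(G_n-G)$ are $O_P(1)$ in the respective one-dimensional weighted $L^2$ norms, so $\|R_n\|_w^2=n\cdot O_P(1/n)\cdot O_P(1/n)=O_P(1/n)\to 0$.

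The step I expect to be the main obstacle is precisely this control of $R_n$ in the Hilbert norm rather than pointwise: one must argue that the product of two empirically small quantities, rescaled by $\sqrt n$, still vanishes in $\mathbb{H}_w$. The factorization of $w$ makes the bound transparent, but in the general (non-factorized-weight) case it would require a Cauchy--Schwarz estimate together with the tightness of the two marginal empirical-transform processes. The remaining ingredients — the linearization identity, the boundedness yielding $E\|Z_1\|_w^2<\infty$, and the covariance computation under independence — are routine.
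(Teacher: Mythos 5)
Your proof is correct, and although you arrive at it by elementary algebra rather than V-statistic theory, your decomposition coincides term by term with the paper's. The paper views $\xi_n(s,t)$ as a V-statistic of order $2$ with kernel $\Phi((X_1,Y_1),(X_2,Y_2);s,t)=\tfrac{1}{2}\bigl(e^{-sX_1}t^{Y_1}+e^{-sX_2}t^{Y_2}-e^{-sX_1}t^{Y_2}-e^{-sX_2}t^{Y_1}\bigr)$, computes its first projection $\varphi_1$, and writes $\sqrt{n}\,\xi_n=\tfrac{2}{\sqrt{n}}\sum_i\varphi_1(X_i,Y_i;\cdot,\cdot)+r_n$ with $\|r_n\|_w=o_p(1)$. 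Under $H_0$ one has $2\varphi_1(x,y;s,t)=(e^{-sx}-L(s))(t^{y}-G(t))$, which is exactly your $Z_i$, and the second-order (degenerate) component of that decomposition is exactly $-(L_n-L)(G_n-G)$, i.e.\ your $-R_n/\sqrt{n}$; so both arguments share the same skeleton (linearize, apply the Hilbert-space CLT to the linear part, kill the remainder). The substantive difference is that the paper merely asserts $\|r_n\|_w=o_p(1)$, whereas you prove it; your bound via the factorized weight is fine for $w(s,t)=t^b e^{-as}$, but the obstacle you flag for general weights is not actually there: under $H_0$ the processes $L_n-L$ and $G_n-G$ are independent (one is a function of the $X$-sample, the other of the $Y$-sample), so Fubini gives $E\|R_n\|_w^2=n\int_0^\infty\int_0^1 E\bigl[(L_n(s)-L(s))^2\bigr]E\bigl[(G_n(t)-G(t))^2\bigr]w(s,t)\,dt\,ds\le n\cdot n^{-1}\cdot n^{-1}\int_0^\infty\int_0^1 w(s,t)\,dt\,ds=O(1/n)$ for any integrable $w$, with no tightness or Cauchy--Schwarz argument needed. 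What the paper's V-statistic framing buys is a template that carries over to the multivariate and total-independence statistics of Section~\ref{sec: general}, where the kernels no longer collapse to a clean product; what your version buys is a self-contained argument with an explicit remainder estimate, plus a covariance computation (factorization under independence) that is arguably more transparent than evaluating $4E(\varphi_1\varphi_1)$.
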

\begin{proof}[\bf  Proof of Theorem \ref{proc}]
\normalfont
It is clear that $\xi_n(s,t)\in\mathbb{H}_\omega$. For the rest of the proof, we note that for each $(s,t)\in([0,\infty)\times[0,1])$ $\xi_n(s,t)$ is $V-$statistic of order 2 with kernel
\begin{align*}
    \Phi((X_1,Y_1),(X_2,Y_2);s,t)& 
 = \frac{1}{2}\big( e^{-sX_1}t^{Y_1}+e^{-sX_2}t^{Y_2}-e^{-sX_1}t^{Y_2}-e^{-sX_2}t^{Y_1}\big).
\end{align*}
Under $H_0$ we have that $E(\Phi((X_1,Y_1),(X_2,Y_2);s,t))=0$. Furthermore, its first projection is equal to
\begin{align*}
    \varphi_1(x,y;s,t)&=E(\Phi((X_1,Y_1),(X_2,Y_2);s,t)\vert X_{1}=x,Y_{1}=y)\\
    &=\frac{1}{2}\big(e^{-sx}t^{y}+\psi(s,t)-e^{-sx}G(t)-L(s)t^{y}\big).
\end{align*}
Its variance is equal to
\begin{align*}
  E(\varphi_1  (X,Y;s,t))^2&=\frac{1}{4}E\bigg(e^{-sX}t^{Y}+\psi(s,t)-e^{-sX}G(t)-L(s)t^{Y}\bigg)^2\\
 &=\frac{1}{4}\bigg(L(2s)G(t^2)+L^2(s)G^2(t)-L(2s)G^2(t)-L^2(s)G(t^2)
 \bigg)\\
 &=\frac{1}{4}\bigg(G(t^2)-G^2(t)\bigg)\bigg(L(2s)-L^2(s)\bigg),
      \end{align*}
which is finite. Therefore, $\xi_n(s,t)$ is nondegenerate $V-$statistic with finite variance, and consequently, $\sqrt{n}\xi_n$ has a limiting normal $\mathcal{N}(0,(G(t^2)-G^2(t))(L(2s)-L^2(s)))$ distribution. Moreover, $\sqrt{n}\xi_n(s,t)$ can be represented as
\begin{align*}
    \sqrt{n}\xi_n(s,t)&=\frac{2}{\sqrt{n}}\sum\limits_{i=1}^n\varphi_1(X_i,Y_i;s,t)+r_n(s,t),
\end{align*}
such that $\|r_n(s,t)\|_w=o_p(1)$, from which we obtain that 
\begin{align*}
    cov(\sqrt{n}\xi_n(s_1,t_1),\sqrt{n}\xi_n(s_2,t_2))\to K((s_1,t_1),(s_2,t_2))&=4E(\varphi_1(X_1,Y_1;s_1,t_1)\varphi_1(X_1,Y_1;s_2,t_2))\\&=\big(G(t_1t_2)-G(t_1)G(t_2)\big)\big(L(s_1+s_2)-L(s_1)L(s_2)\big)
\end{align*}

Since
\begin{align*}
    \int_0^\infty\int_0^1 K((s,t),(s,t))w(s,t)dtds&=\int_0^\infty\int_0^1\big(G(t^2)-(G(t))^2\big)\big(L(2s)-(L(s))^2\big)w(s,t)dtds<\infty,
\end{align*}
the proof follows.
\end{proof}

As a consequence, we have the following statement.
\begin{corollary}\label{posledica}
    Let the null hypothesis hold. Then 
    \begin{itemize}
        \item $nT_n\overset{d}{\to}\sum\limits_{n=1}^\infty \lambda_{i,\omega}Z_i^2$, where $\lambda_{1,\omega}\geq \lambda_{2,\omega}\geq ...$ is the sequence of eigenvalues of covariance operator defined with kernel \eqref{cov} and weight function $w$, and $Z_1, Z_2,...$ is the sequence of independent standard Gaussian random variables;
        \item $\sqrt{n}\frac{I_n}{\sigma_w}\overset{d}{\to}Z$, where $Z$ is standard Gaussian random variable and 
\begin{align}\label{disperzija}
        \sigma^2_w=\int_0^\infty\int_0^1\int_0^\infty\int_0^1 K((s_1,t_1),(s_2,t_2))w(s_1,t_1)w(s_2,t_2)dt_1ds_1dt_2ds_2.
\end{align}
        Specifically, for 
        $w(s,t)=t^b e^{-a s}$
        
\begin{align}\label{disperzija2}
        \sigma^2_{a,b}=Var\Big(\frac{1}{X + a}\Big)Var\Big(\frac{1}{Y +b  + 1}\Big).
\end{align}
    \end{itemize}
\end{corollary}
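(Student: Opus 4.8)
The plan is to read off both limits from the weak convergence $\sqrt{n}\,\xi_n \Rightarrow \xi$ in $\mathbb{H}_w$ established in Theorem \ref{proc}, by applying the continuous mapping theorem to two continuous maps on the Hilbert space. Recalling the representations $T_n=\|\xi_n\|_w^2$ and $I_n=\langle \xi_n,1\rangle_w$ from the beginning of the section, we have $nT_n=\|\sqrt{n}\,\xi_n\|_w^2$ and $\sqrt{n}\,I_n=\langle \sqrt{n}\,\xi_n,1\rangle_w$, so everything reduces to identifying the laws of $\|\xi\|_w^2$ and $\langle \xi,1\rangle_w$ for the limiting Gaussian element $\xi$.

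For the first assertion, the map $f\mapsto\|f\|_w^2$ is continuous on $\mathbb{H}_w$, so the continuous mapping theorem gives $nT_n\overset{d}{\to}\|\xi\|_w^2$, and it remains to identify the distribution of $\|\xi\|_w^2$. Since the trace integral at the end of the proof of Theorem \ref{proc} is finite, the covariance operator $\mathcal{K}$ associated with the kernel \eqref{cov} and weight $w$ is a trace-class (hence compact, self-adjoint, nonnegative) integral operator on $\mathbb{H}_w$; its spectrum is therefore a discrete sequence $\lambda_{1,\omega}\geq\lambda_{2,\omega}\geq\cdots\geq 0$ with an associated orthonormal basis of eigenfunctions. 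Expanding $\xi$ in its Karhunen–Loève series along this basis yields $\|\xi\|_w^2=\sum_{i\geq 1}\lambda_{i,\omega}Z_i^2$ with i.i.d. standard Gaussian $Z_i$, the series converging almost surely because $\sum_i\lambda_{i,\omega}$ equals the finite trace. This is the claimed limit.

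For the second assertion, since $\int_0^\infty\int_0^1 w(s,t)\,ds\,dt<\infty$ the constant function $1$ belongs to $\mathbb{H}_w$, so $f\mapsto\langle f,1\rangle_w$ is a bounded linear functional, and the continuous mapping theorem gives $\sqrt{n}\,I_n\overset{d}{\to}\langle \xi,1\rangle_w$. A continuous linear image of a centered Gaussian element is a centered univariate Gaussian, hence $\langle \xi,1\rangle_w\sim\mathcal{N}(0,\sigma_w^2)$ with $\sigma_w^2=\langle \mathcal{K}1,1\rangle_w$, which is exactly the double integral \eqref{disperzija}; standardizing produces the standard normal limit. To obtain the explicit form \eqref{disperzija2}, I would substitute $w(s,t)=t^b e^{-as}$ and the factorized kernel $K=(G(t_1t_2)-G(t_1)G(t_2))(L(s_1+s_2)-L(s_1)L(s_2))$ into \eqref{disperzija}, at which point the integral separates into an $s$-factor and a $t$-factor. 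Writing $L(s)=E(e^{-sX})$ and $G(t)=E(t^Y)$ and applying Fubini–Tonelli, the $s$-factor collapses to $E\big[(X+a)^{-2}\big]-\big(E[(X+a)^{-1}]\big)^2=\mathrm{Var}\big((X+a)^{-1}\big)$ and, analogously, the $t$-factor to $\mathrm{Var}\big((Y+b+1)^{-1}\big)$; their product is \eqref{disperzija2}.

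The two one-dimensional integral computations are harmless bookkeeping; the single step requiring genuine care is the spectral decomposition underlying the first assertion, namely verifying that the covariance operator is trace-class so that $\|\xi\|_w^2$ admits the quadratic-form representation with a summable eigenvalue sequence. This is precisely what the finiteness of $\int_0^\infty\int_0^1 K((s,t),(s,t))w(s,t)\,dt\,ds$ in Theorem \ref{proc} secures, so the main obstacle is already dispatched there and the rest of the argument is routine.
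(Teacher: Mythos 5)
Your proposal is correct and follows essentially the same route as the paper: the continuous mapping theorem applied to the weak convergence $\sqrt{n}\,\xi_n \Rightarrow \xi$ from Theorem \ref{proc}, a spectral (Karhunen--Lo\`eve) expansion of $\xi$ to identify the law of $\|\xi\|_w^2$, the image of the Gaussian limit under the bounded linear functional $\langle\,\cdot\,,1\rangle_w$ for $I_n$, and a Fubini-type interchange to reduce \eqref{disperzija} to the product of variances in \eqref{disperzija2}. The only cosmetic difference is that you justify the spectral decomposition via the trace-class property secured by the finite trace integral, whereas the paper invokes Mercer's theorem; both deliver the same eigenvalue expansion.
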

\begin{proof}[\bf  Proof of Corollary \ref{posledica}]
\begin{itemize}
    \item Since $nT_n=\|\sqrt{n}\xi_n\|_w^2$, $\xi_n \in \mathbb{H}_\omega$ and function $x \to \| x \|_w^2$ for $x\in \mathbb{H}_\omega$ is continuous, $nT_n\overset{d}{\to}\| \xi \|_w^2$. Given that covariance function is bounded, symmetric and positive definite combining Merecer's theorem and Krahunen-Loeve expansion process 
\begin{align}\label{xi_razvoj}
\xi(s,t)=\sum_{i=1}^\infty \sqrt{\lambda_{i,w}} Z_i q_{i,w}(s,t),    
\end{align}
where $\{\lambda_{i,w}\}$ is  the sequence of eigenvalues of integral operator with kernel $K$ defined in \eqref{cov} and $\{q_{i,w}\}$ is the sequence corresponding  eigenfunctions, and 
$\{Z_i\},$ is the sequence of independent standard Gaussian random variables. Substituting the expression \eqref{xi_razvoj} in $\|\xi\|^2_w$ we obtain the limiting distribution of $nT_n.$
    \item Since $I_n=\langle\xi_n,1\rangle$, $\sqrt{n}I_n\overset{d}{\to}\mathcal{N}(0,\sigma^2_w)$ where $\sigma^2_w$ is given in \eqref{disperzija}. 
Having in mind the definitions of $L(\cdot)$ and $G(\cdot)$, an interchange of expectations give us \eqref{disperzija2}. 
\end{itemize}
   
\end{proof}

\begin{remark}
    Computing the eigenvalues $\{\lambda_{i,\omega},i=1,2,...\}$ is a challenging task. For more details, we refer to \cite{ebner2024eigenvalues} and the references therein.
\end{remark}
\begin{remark}
    Although $\sigma^2_{a,b}$ from \eqref{disperzija2} is unknown it can be estimated with 
    \begin{align}\label{sigmaab}
         \widehat{\sigma}^2_{a,b}=\frac{1}{(n-1)^2}\sum\limits_{i=1}^n\Big(\frac{1}{X_i + a}-\frac{1}{n}\sum\limits_{j=1}^n\frac{1}{X_j + a}\Big)^2\sum\limits_{i=1}^n\Big(\frac{1}{Y_i +b  + 1}-\frac{1}{n}\sum\limits_{j=1}^n\frac{1}{Y_j +b  + 1}\Big)^2.
    \end{align}
    Moreover, following the approach from \cite{bucalo2024testing}, using auxiliary results from \cite{BN14}, we can show that $E(\widehat{\sigma}_{a,b}^2)={\sigma}_{a,b}^2(1+o(1))$ and $Var(\widehat{\sigma}_{a,b}^2)=o({\sigma}_{a,b}^4)$, and hence the $\widehat{\sigma}_{a,b}$  is a ratio-consistent estimator of ${\sigma}_{a,b}$, i.e. ${\widehat{\sigma}_{a,b}}/{\sigma_{a,b}}\xrightarrow{P}1$.
\end{remark}

Figure \ref{sampleNIn} illustrates the convergence of the null distribution of $\sqrt{n}\frac{I_n}{\widehat{\sigma}_{a,b}}$, labeled with $st.I_n$, when $X$ has exponential $E(1.5)$ distribution and $Y$ has Poisson ${P}(2)$ distribution. Table \ref{tab:q0.99} provides deeper insight into the convergence of the 0.95 and 0.99 quantiles of $\sqrt{n}\frac{\mid I_n\mid}{\widehat{\sigma}_{a,b}}$ for various marginal distributions of $X$ and $Y$ under the null hypothesis of independence. The finite sample distributions of test statistics are estimated using the Monte Carlo approach with $N=10000$ replicates. 


\begin{figure}[!htb]
\centering
{%
{\includegraphics[width=8cm]{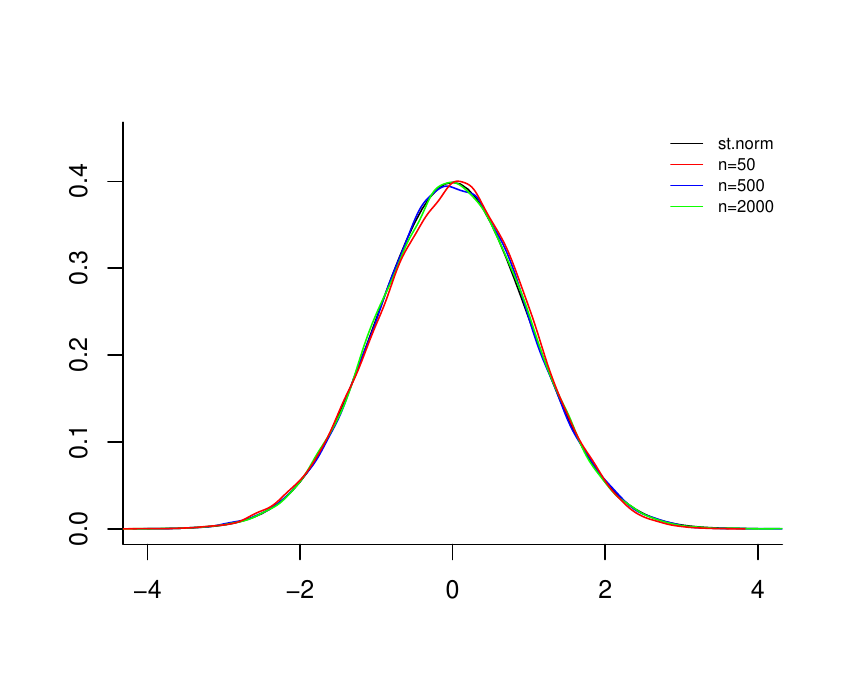}}}
\caption{ Empirical  null distribution of $st.I_n=\sqrt{n}\frac{I_n}{\widehat{\sigma}_{a,b}},$ for different $n$,  $(a, b)=(2,1)$, and marginals  $E(1.5)\times P(2)$, $N=1000$} 
\label{sampleNIn}
\end{figure}

\begin{table}[h!]
    \centering
    \resizebox{1\textwidth}{!}{
    \begin{tabular}{c|c|ccccc|ccccc|ccccc}
       &  & \multicolumn{5}{c|}{$E(1.5)\times P(2)$} & \multicolumn{5}{c|}{$E(1.5)\times NB(2,0.4)$} & \multicolumn{5}{c}{$\gamma(5,1)\times B(10,0.4)$}\\ \hline
    ${1-\alpha}$  & \diagbox{$(a,b)$}{$n$}  &  $50$ & $100$ & $500$ & $1000$ & $2000$ & $50$ & $100$ & $500$ & $1000$ & $2000$ & $50$ & $100$ & $500$ & $1000$ & $2000$\\\hline
       \multirow{3}{*}{\rotatebox[]{90}{0.95}} &
     $(2,1)$ & 1.92 & 1.96 & 1.95 & 1.96 & 1.96 & 1.93 & 1.94 & 1.95 & 1.96 &1.96 & 1.92& 1.94& 1.96 & 1.97 &  1.95\\
    & $(0.5,2)$ & 1.93 & 1.94 & 1.95 & 1.95 & 1.96 & 1.94 & 1.93 & 1.96 & 1.96 & 1.97 & 1.93 & 1.94& 1.96 & 1.97 &  1.96 \\
     &  limit &\multicolumn{15}{c}{1.96} \\ \hline\hline

       \multirow{3}{*}{\rotatebox[]{90}{0.99}} &   $(2,1)$ & 2.51 & 2.54 & 2.53 & 2.55 &2.58 & 2.55 & 2.55 & 2.54 & 2.57& 2.57 & 2.51 & 2.56 & 2.59 & 2.56 & 2.58\\
     &   $(0.5,2)$ & 2.50 & 2.53 & 2.56 & 2.57 &2.59 & 2.50 & 2.55 & 2.59 & 2.55 & 2.56 & 2.52 & 2.57 & 2.54 & 2.57 & 2.58\\
      & limit &\multicolumn{15}{c}{2.58} \\ \hline
    \end{tabular}}
    \caption{$1-\alpha$ quintiles of empirical null distribution of $\sqrt{n}\frac{\mid I_n\mid}{\widehat{\sigma}_{a,b}}$}
    \label{tab:q0.99}
\end{table}

From the presented results, we can see that convergence depends on the marginal distributions and quantile levels. For $\alpha=0.05$ the approximation is justified for $n=500$, while for smaller $\alpha$, such as $\alpha=0.01$, we need larger sample sizes to use the approximation results. For smaller $\alpha$ the test will be conservative.


\section{From bivariate to multivariate: Generalizing $T_n$ and $I_n$} \label{sec: general}
In this section, we considered natural generalizations of the bivariate setting, the cases of testing independence between two multivariate vectors, as well as the case of testing total independence for multivariate vectors with components of mixed type. For introduced statistics, we present their asymptotic properties, while the results of the power study are deferred to Section \ref{sec:power}.

\subsection{Testing independence of two random vectors}
We  consider the problem of testing the mutual independence of
 $\boldsymbol{X} \in \mathbb{R}_{+}^{r_1}$ and $\boldsymbol{Y} \in \mathbb{N}_0^{r_2}$.
 
 Let  $\boldsymbol{s}=(s_1,...,s_{r_1})\in\mathbb{R}_{+}^{r_1}$ and $\boldsymbol{t}=(t_1,...,t_{r_2})\in \mathbb{N}_0^{r_2},$ and $(\boldsymbol{X}_1,\boldsymbol{Y}_1),...,(\boldsymbol{X}_n,\boldsymbol{Y}_n )$, where $\boldsymbol{X}_i=(X_{i1},...,X_{ir_1})$ and $\boldsymbol{Y}_i=(Y_{i1},...,Y_{ir_2})$, be the available i.i.d. sample. 
Then we consider the empirical transform
\begin{align}\label{psi_multivariate}
\psi_n(\boldsymbol{s},\boldsymbol{t})&=\frac{1}{n} \sum\limits_{i=1}^{n} e^{-\sum_{j=1}^{r_1}s_jX_{ij}}\prod_{k=1}^{r_2} t_k^{Y_{ik}}.
\end{align}

 In this case, we propose the following generalizations of statistics $I_n$ and $T_n$
\begin{align*}
 I^{r_1,r_2}_{n}(w)&=\int_{\mathbb{R}_{+}^{r_1}}\int_{[0,1]^{r_2}}(\psi_{n}(\boldsymbol{s}, \boldsymbol{t}) - L_{n}(\boldsymbol{s}) G_{n}(\boldsymbol{t}))w(\boldsymbol{s},\boldsymbol{t})d\boldsymbol{s}d\boldsymbol{t},\\
  T^{r_1,r_2}_{n}(w)&=\int_{\mathbb{R}_{+}^{r_1}}\int_{[0,1]^{r_2}}(\psi_{n}(\boldsymbol{s}, \boldsymbol{t}) - L_{n}(\boldsymbol{s}) G_{n}(\boldsymbol{t}))^2w(\boldsymbol{s},\boldsymbol{t})d\boldsymbol{s}d\boldsymbol{t},
  \end{align*}
where 
\begin{align*}
L_n(\boldsymbol{s}) &=
\psi_n(\boldsymbol{s},\boldsymbol{1})=
\frac{1}{n}\sum\limits_{i=1}^{n} e^{-\sum_{j=1}^{r_1}s_jX_{ij}},\\ 
G_n(\boldsymbol{t}) &=\psi_n(\boldsymbol{0},\boldsymbol{t})= \frac{1}{n}\sum\limits_{i=1} ^{n} \prod_{k=1}^{r_2} t_k^{Y_{ik}},
\end{align*}
and  $\boldsymbol{1}$ is a vector of $r_2$ ones and $\boldsymbol{0}$ is a vector of $r_1$ zeros, {and $w:\mathbb{R}_+^{r_2}\times [0,1]^{r_1} \to \mathbb{R}_+^{(r_1+r_2)}$  is  a  weight function  such that 
$\int_{\mathbb{R}_{+}^{r_1}}\int_{[0,1]^{r_2}}w(\boldsymbol{s},\boldsymbol{t})d\boldsymbol{s}d\boldsymbol{t}<\infty$}.

For the weight function $w(\boldsymbol{s},\boldsymbol{t})=e^{-\sum_{i=1}^{r_1}a_is_i}\prod_{k=1}^{r_2}t_k^{b_{k}},\; (a_1,...,a_{r_1})\in \mathbf{R}^{r_1}_{+} $ and $(b_1,...,b_{r_2})\in \mathbf{R}^{r_2}_{+}$, the statistic ${I}^{r_1,r_2}_n$ can also be represented as
\begin{align*}
    I_n^{r_1,r_2}=&
  \frac{1}{n}\sum_{i=1}^{n} \frac{1}{\prod_{k_1=1}^{r_1}(X_{ik_1} + a_{k_1})\prod_{k_2=1}^{r_2}(Y_{ik_2} + b_{k_2}+ 1)}\\&- \frac{1}{n^2} \sum_{i=1}^{n}\frac{1}{\prod_{k_1=1}^{r_1}(X_{ik_1} + a_{k_1})}\sum\limits_{i=1}^n\frac{1}{\prod_{k_2=1}^{r_2}(Y_{i{k_2}} + b_{k_2}+1)},
  \end{align*}
  while $T_n^{r_1,r_2}$ admits following representation
  \begin{align*}
  T_n^{r_1,r_2}=&\frac{1}{n^2} \sum_{i=1}^{n} \sum_{j=1}^{n} \frac{1}{\prod_{k_1=1}^{r_1}(X_{ik_1} + X_{jk_1}+a_{k_1})\prod_{k_2=1}^{r_2}(Y_{ik_2} + Y_{jk_2} +b_{k_2}  + 1)}\\
 &+\frac{1}{n^4}\sum_{i=1}^{n}\sum_{j=1}^{n}\frac{1}{\prod_{k_1=1}^{r_1}(X_{ik_1}+X_{jk_1}+a_{k_1})}\sum_{i=1}^{n}\sum_{j=1}^{n}\frac{1}{\prod_{k_2=1}^{r_2}(Y_{ik_2}+Y_{jk_2}+b_{k_2}+1)}\\
 &-2\frac{1}{n^3}\sum_{i=1}^{n}\left( \sum_{j=1}^{n} 
     \frac{1}{\prod_{k_1=1}^{r_1}(X_{ik_1}+X_{jk_1}+a_{k_1})} 
          \sum_{q=1}^{n}\frac{1}{\prod_{k_2=1}^{r_2}(Y_{ik_2}+Y_{qk_2}+b_{k_2}+1)}\right).
\end{align*}
Now, the underlying   empirical process $\xi_n$, which is the base for assessing limiting properties, is defined by
\begin{align*}
\xi_n({\boldsymbol{s},\boldsymbol{t}})=
\psi_n(\boldsymbol{s},\boldsymbol{t})-
\psi_n(\boldsymbol{s},\boldsymbol{1})
\psi_n(\boldsymbol{0},\boldsymbol{t}).
\end{align*}
Analogously, as before, we can prove the following theorem.
\begin{theorem}\label{asym}
    Let the null hypothesis hold. Then
    \begin{itemize}
        \item $nT_n^{r_1,r_2}\overset{d}{\to}\sum_{n=1}^{\infty}\lambda_{i,\omega}Z_i^2,$ where $\lambda_{1,\omega}\geq\lambda_{2,\omega}\geq\cdots $
        us the sequence of eigenvalues of the covariance operator defined with kernel
        \begin{align*}
K((\boldsymbol{s_1},\boldsymbol{t_1}),(\boldsymbol{s_2},\boldsymbol{t_2}))=&L(\boldsymbol{s_1}+\boldsymbol{s_2})G(\boldsymbol{t_1}\boldsymbol{t_2})
  +L(\boldsymbol{s_1})L(\boldsymbol{s_2})G(\boldsymbol{t_1})G(\boldsymbol{t_2})\\&
  -
  L(\boldsymbol{s_1}+\boldsymbol{s_2})G(\boldsymbol{t_1})G(\boldsymbol{t_2})
  -L(\boldsymbol{s_1})L(\boldsymbol{s_2})G(\boldsymbol{t_1}\boldsymbol{t_2})
\end{align*}

        and an integrabile weight function $\omega$, and $Z_1, Z_2,...$ is the sequence of independent standard Gaussian random variables;

        \item $\sqrt{n}\frac{I^{r_1,r_2}_{n}}{{\sigma}_w}\overset{d}{\to}Z$, where $Z$ is standard Gaussian random variable and 
\begin{align}\label{disperzija2RV}
{\sigma}^2_w=\int_{(\mathbb{R}^+)^{r_1}}\int_{[0,1]^{r_2}}\int_{(\mathbb{R}^+)^{r_1}}\int_{[0,1]^{r_2}} K((\boldsymbol{s_1},\boldsymbol{t_1}),(\boldsymbol{s_2},\boldsymbol{t_2}))w(\boldsymbol{s_1},\boldsymbol{t_1})w(\boldsymbol{s_2},\boldsymbol{t_2})d\boldsymbol{t_1}d\boldsymbol{s_1}d\boldsymbol{t_2}d\boldsymbol{s_2}.
\end{align}
 Specifically, for        $w(\boldsymbol{s},\boldsymbol{t})=e^{\sum_{i=1}^{r_1}a_is_i}\prod_{k=1}^{r_2}t_k^{b_{k}},\; (a_1,...,a_{r_1})\in \mathbf{R}^{r_1}_{+} $ and $(b_1,...,b_{r_2})\in \mathbf{R}^{r_2}_{+}$,
\begin{align*}
 \sigma^2_{\boldsymbol{a},\boldsymbol{b}} &=
   Var\bigg(\prod_{i=1}^{r_1}\frac{1}{(X_i+a_i)}\bigg)Var\bigg(\prod_{i=1}^{r_2}\frac{1}{(Y_i+b_i+1)}\bigg).
\end{align*}
    \end{itemize}
\end{theorem}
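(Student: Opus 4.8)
The plan is to mirror the bivariate arguments behind Theorem~\ref{proc} and Corollary~\ref{posledica}, exploiting the fact that the multivariate process retains exactly the product structure of its bivariate counterpart. First I would introduce the shorthand $u_i=e^{-\sum_{j=1}^{r_1}s_jX_{ij}}$ and $v_i=\prod_{k=1}^{r_2}t_k^{Y_{ik}}$, so that $\xi_n(\boldsymbol{s},\boldsymbol{t})=\frac{1}{n}\sum_i u_iv_i-\frac{1}{n^2}\sum_i u_i\sum_j v_j$. This is formally identical to the bivariate expression, so for each fixed $(\boldsymbol{s},\boldsymbol{t})$ it is a $V$-statistic of order two with symmetric kernel $\Phi=\tfrac12(u_1v_1+u_2v_2-u_1v_2-u_2v_1)$. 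Under $H_0$, the independence of $\boldsymbol{X}$ and $\boldsymbol{Y}$ gives $E[\Phi]=0$, and its first projection equals $\varphi_1(\boldsymbol{x},\boldsymbol{y};\boldsymbol{s},\boldsymbol{t})=\tfrac12\big(e^{-\sum s_jx_j}-L(\boldsymbol{s})\big)\big(\prod t_k^{y_k}-G(\boldsymbol{t})\big)$, the factorized form that drives everything.

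Using this factorization together with the independence of $\boldsymbol{X}$ and $\boldsymbol{Y}$, I would compute the limiting covariance $K((\boldsymbol{s_1},\boldsymbol{t_1}),(\boldsymbol{s_2},\boldsymbol{t_2}))=4E[\varphi_1(\cdot;\boldsymbol{s_1},\boldsymbol{t_1})\varphi_1(\cdot;\boldsymbol{s_2},\boldsymbol{t_2})]$, which splits into a product $\big(L(\boldsymbol{s_1}+\boldsymbol{s_2})-L(\boldsymbol{s_1})L(\boldsymbol{s_2})\big)\big(G(\boldsymbol{t_1}\boldsymbol{t_2})-G(\boldsymbol{t_1})G(\boldsymbol{t_2})\big)$; here $E[u_1u_2]=L(\boldsymbol{s_1}+\boldsymbol{s_2})$ comes from additivity of the exponent and $E[v_1v_2]=G(\boldsymbol{t_1}\boldsymbol{t_2})$ from the componentwise product $\boldsymbol{t_1}\boldsymbol{t_2}$. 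Expanding the product recovers precisely the stated kernel. Since $L,G\le 1$, Jensen's inequality yields $0\le L(2\boldsymbol{s})-L(\boldsymbol{s})^2\le 1$ and $0\le G(\boldsymbol{t}^2)-G(\boldsymbol{t})^2\le 1$, so the trace $\int_{\mathbb{R}_+^{r_1}}\int_{[0,1]^{r_2}}K((\boldsymbol{s},\boldsymbol{t}),(\boldsymbol{s},\boldsymbol{t}))\,w\,d\boldsymbol{t}d\boldsymbol{s}$ is bounded by $\int\int w<\infty$, which is the integrability needed for a Hilbert-space limit.

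Next I would write the H\'ajek decomposition $\sqrt{n}\,\xi_n(\boldsymbol{s},\boldsymbol{t})=\frac{2}{\sqrt{n}}\sum_i\varphi_1(\boldsymbol{X}_i,\boldsymbol{Y}_i;\boldsymbol{s},\boldsymbol{t})+r_n(\boldsymbol{s},\boldsymbol{t})$ with $\|r_n\|_w=o_p(1)$, and invoke the central limit theorem for i.i.d.\ sums of $\mathbb{H}_w$-valued elements applied to the leading term, whose square-integrability is exactly the finite trace above. This gives $\sqrt{n}\,\xi_n\Rightarrow\xi$ in $\mathbb{H}_w$, where $\xi$ is centered Gaussian with covariance operator induced by $K$. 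The two conclusions then follow as in Corollary~\ref{posledica}: by the continuous mapping theorem $nT_n^{r_1,r_2}=\|\sqrt{n}\,\xi_n\|_w^2\to\|\xi\|_w^2$, and a Mercer/Karhunen--Lo\`eve expansion $\xi=\sum_i\sqrt{\lambda_{i,w}}\,Z_i\,q_{i,w}$ yields $\|\xi\|_w^2=\sum_i\lambda_{i,w}Z_i^2$; meanwhile $\sqrt{n}\,I_n^{r_1,r_2}=\langle\sqrt{n}\,\xi_n,1\rangle_w\to\langle\xi,1\rangle_w$ is Gaussian with variance \eqref{disperzija2RV}. For the explicit weight, both $K$ and $w$ factor into an $\boldsymbol{s}$-part and a $\boldsymbol{t}$-part, so the quadruple integral factors into a product; evaluating each factor via $\int_0^\infty e^{-sx}e^{-as}ds=(x+a)^{-1}$ and $\int_0^1 t^{y+b}dt=(y+b+1)^{-1}$ and interchanging expectation and integration identifies the factors as $\mathrm{Var}\big(\prod_i(X_i+a_i)^{-1}\big)$ and $\mathrm{Var}\big(\prod_i(Y_i+b_i+1)^{-1}\big)$, giving $\sigma^2_{\boldsymbol{a},\boldsymbol{b}}$.

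The step I expect to be the main obstacle is the remainder bound $\|r_n\|_w=o_p(1)$ together with the tightness required to lift finite-dimensional convergence to genuine weak convergence in $\mathbb{H}_w$. Pointwise asymptotic normality of each nondegenerate $V$-statistic is routine, but one must control the degenerate second-order component uniformly in the $L^2(w)$ norm; this is precisely what the phrase ``analogously, as before'' suppresses, and it rests on the finiteness of the trace established above combined with a Fubini argument moving the $L^2(w)$ expectation inside.
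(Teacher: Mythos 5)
Your proposal is correct and takes essentially the same route the paper intends: the paper gives no separate proof of Theorem~\ref{asym}, stating only that it follows ``analogously'' to Theorem~\ref{proc} and Corollary~\ref{posledica}, and your argument is precisely that bivariate proof transplanted to the multivariate setting — the same $V$-statistic kernel, the same H\'ajek decomposition with $\|r_n\|_w=o_p(1)$, the same Mercer/Karhunen--Lo\`eve step for $nT_n^{r_1,r_2}$, and the same factorization of the quadruple integral for $\sigma^2_{\boldsymbol{a},\boldsymbol{b}}$. Your explicit factorized form of the first projection, $\varphi_1=\tfrac12\bigl(e^{-\sum s_jx_j}-L(\boldsymbol{s})\bigr)\bigl(\prod_k t_k^{y_k}-G(\boldsymbol{t})\bigr)$, which under $H_0$ coincides with the paper's expression, is exactly what makes the multivariate covariance and trace computations go through.
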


As in the case for $r_1=r_2=1$ 
we can use a ratio-consistent estimator of $\sigma^2_{a,b}$ analogous to \eqref{sigmaab}.

\subsection{Testing total independence}
Here we consider the problem of testing total independence within the components of vector $\boldsymbol{Z}=(X_1,...,X_{r_1},Y_1,...,Y_{r_2}) \in \mathbb{R}_{+}^{r_1}\times \mathbb{N}_0^{r_2}$. 
Let $(X_{11},....,X_{1r_1},Y_{11},...,Y_{1r_{2}}),...,(X_{n1},....,X_{nr_1},Y_{n1},...,Y_{nr_{2}})$ be a multivariate sample equally distributed as $\boldsymbol{Z}$.  

Then, the appropriate extensions of  previously considered statistics are given by
 \begin{align*}
 \widetilde{I}^{r_1,r_2}_{n}(w)&=\int_{\mathbb{R}_{+}^{r_1}}\int_{[0,1]^{r_2}}\Big(\psi_{n}(\boldsymbol{s}, \boldsymbol{t}) -\prod_{i=1}^{r_{1}}\psi_n(\underbrace{0,...,0}_{i-1},s_i,\underbrace{0...0}_{r_1-i},\boldsymbol{1})
 \cdot\prod_{k=1}^{r_2}\psi_n(\boldsymbol{0},\underbrace{1,...,1}_{k-1},t_k,\underbrace{1,...,1}_{r_2-k})\Big)
w(\boldsymbol{s},\boldsymbol{t})d\boldsymbol{s}d\boldsymbol{t}\\
  \widetilde{T}^{r_1,r_2}_{n}(w)&=\int_{\mathbb{R}_{+}^{r_1}}\int_{[0,1]^{r_2}}\Big(\psi_{n}(\boldsymbol{s}, \boldsymbol{t}) - \prod_{i=1}^{r_{1}}\psi_n(\underbrace{0,...,0}_{i-1},s_i,\underbrace{0...0}_{r_1-i},\boldsymbol{1})
 \cdot\prod_{k=1}^{r_2}\psi_n(\boldsymbol{0},\underbrace{1,...,1}_{k-1},t_k,\underbrace{1,...,1}_{r_2-k})\Big)^2w(\boldsymbol{s},\boldsymbol{t})d\boldsymbol{s}d\boldsymbol{t},
  \end{align*}
where $w:\mathbb{R}_+^{r_2}\times [0,1]^{r_1} \to \mathbb{R}_+^{(r_1+r_2)}$  is  a  weight function  that  such that 
$\int_{\mathbb{R}_{+}^{r_1}}\int_{[0,1]^{r_2}}w(\boldsymbol{s},\boldsymbol{t})d\boldsymbol{s}d\boldsymbol{t}<\infty$,
and $\psi_n$ is introduced in \eqref{psi_multivariate}.

 After calculations, for $w(\boldsymbol{s},\boldsymbol{t})=e^{-\sum_{i=1}^{r_1}a_is_i}\prod_{k=1}^{r_2}t_k^{b_{k}},\; (a_1,...,a_{r_1})\in \mathbf{R}^{r_1}_{+} $ and $(b_1,...,b_{r_2})\in \mathbf{R}^{r_2}_{+}$, we get
\begin{align*}
\widetilde{I}^{r_1,r_2}_{n} =&\frac{1}{n}\sum\limits_{i=1}^{n}\frac{1}{\prod_{k_1=1}^{r_1}(X_{i{k_1}}+a_{k_1})\prod_{k_2=1}^{r_2}(Y_{ik_2}+b_{k_2}+1)}\\
&-\frac{1}{n^{r_1+r_2}}\prod_{k_1=1}^{r_1}\sum\limits_{i=1}^{n}\frac{1}{(X_{ik_1}+a_{k_1})}\prod_{k_2=1}^{r_2}
\sum\limits_{j=1}^{n}
\frac{1}{(Y_{jk_2}+b_{k_2}+1)}
\end{align*}
and
\begin{equation}\label{tn}
\begin{aligned}
\widetilde{T}^{r_1,r_2}_{n}
    =& \frac{1}{n^2} \sum_{i=1}^{n} \sum_{j=1}^{n} \frac{1}{\prod_{k_1=1}^{r_1}(X_{ik_1} + X_{jk_1}+a_{k_1})\prod_{k_2=1}^{r_2}(Y_{ik_2} + Y_{jk_2} +b_{k_2}  + 1)}\\
    &+\frac{1}{n^{2(r_1+r_2)}}\prod_{k_1=1}^{r_1}\sum_{i=1}^{n} \sum_{j=1}^{n}\frac{1}{(X_{ik_1} + X_{jk_1}+a_{k_1})}\prod_{k_2=1}^{r_2}\sum_{i=1}^{n}\sum_{j=1}^{n}\frac{1}{(Y_{ik_2} + Y_{jk_2} + b_{k_2}  + 1)}\\
    &-2\frac{1}{n^{r_1+r_2+1}} \sum_{i=1}^{n} \Bigg(\sum_{l_1=1}^n \cdots \sum_{l_{r_1}=1}^{n}
    \sum_{m_1=1}^n \cdots \sum_{m_{r_2}=1}^{n}
    \frac{1}{\prod_{j=1}^{r_1}(X_{ij} + X_{l_{j}j}+a_{j})\prod_{k=1}^{r_2}(Y_{ik} + Y_{m_{k}k}  +b_{k} + 1)}\Bigg).
 \end{aligned}   
\end{equation}

\begin{remark}
{A more efficient way for implementation is to represent the third summand of \eqref{tn} as
\begin{align*}
      &-2\frac{1}{n^{r_1+r_2+1}} \sum_{i=1}^{n} \Bigg(\prod_{j=1}^{r_1}\sum_{l_j=1}^n
    \frac{1}{(X_{ij} + X_{l_{j}j}+a_{j})}
    \prod_{k=1}^{r_2}
    \sum_{m_k=1}^n
    \frac{1}{(Y_{ik} + Y_{m_{k}k}  +b_{k} + 1)}\Bigg).
\end{align*}}

\end{remark}

The limiting results can be derived analogously to those obtained for the previously studied bivariate vectors, by analyzing the empirical process defined by
\begin{align*}
    \widetilde{\xi}_n(\boldsymbol{s},\boldsymbol{t})&=
    \frac{1}{n} \sum\limits_{i=1}^{n} \Big( e^{-\sum_{j=1}^{r_1}s_jX_{ij}}\prod_{k=1}^{r_2} t_k^{Y_{ik}}\Big)
    -
    \frac{1}{n^{r_1}}\prod_{j=1}^{r_1} \sum_{i=1}^{n} \Big(e^{- s_j X_{ij}}\Big)
    \frac{1}{n^{r_2}} \prod_{j=1}^{r_2}\sum_{i=1}^{n} \Big( t_j^{Y_{ij}}\Big)\\
    &=\psi_n(\boldsymbol{s},\boldsymbol{t})-\prod_{j=1}^{r_1}L_n(s_j)\prod_{k=1}^{r_2}G_n(t_k)
\end{align*}
from where we can derive null distributions of both statistics.
Therefore we present results without proof.
\begin{theorem}\label{teoremaTI}
    Let the null hypothesis hold. Then 
    \begin{itemize}
        \item $n\widetilde{T}^{r_1,r_2}_{n}\overset{d}{\to}\sum\limits_{n=1}^\infty \widetilde{\lambda}_{i,\omega}Z_i^2$, where $\widetilde{\lambda}_{1,\omega}\geq \widetilde{\lambda}_{2,\omega}\geq ...$ is the sequence of eigenvalues of the covariance operator defined with kernel 
        \begin{align*}
\widetilde{K}((\boldsymbol{s_1},\boldsymbol{t_1}),(\boldsymbol{t_2},\boldsymbol{s_2}))=&
  \prod_{j=1}^{r_1}L(s_{1j}+s_{2j})\prod_{k=1}^{r_2}G(t_{1k}t_{2k})\\&+(r_1+r_2-1)\prod_{j=1}^{r_1} L(s_{1j})L(s_{2j})\prod_{k=1}^{r_2}G(t_{1k})G(t_{2k})\\
 &-\sum_{d=1}^{r_1}L(s_{1d}+s_{2d})\prod_{j\neq d,j=1}^{r_1}
 L(s_{1j})L(s_{2j})\prod_{k=1}^{r_2}G(t_{1k})G(t_{2k})\\
 &-\prod_{j=1}^{r_1}L(s_{1j})L(s_{2j})\sum_{d=1}^{r_2}G(t_{1d}t_{2d})\prod_{k\neq d, k=1}^{r_2}G(t_{1k})G(t_{2k}),
\end{align*}  
        and an integrabile weight function $w$, and $Z_1, Z_2,...$ is the sequence of independent standard Gaussian random variables;
        \item $\sqrt{n}\frac{\widetilde{I}^{r_1,r_2}_{n}}{\widetilde{\sigma}_w}\overset{d}{\to}Z$, where $Z$ is standard Gaussian random variable and 
\begin{align}\label{disperzija1}
    \widetilde{\sigma}^2_w=\int_{(\mathbb{R}^+)^{r_1}}\int_{[0,1]^{r_2}}\int_{(\mathbb{R}^+)^{r_1}}\int_{[0,1]^{r_2}} \widetilde{K}((\boldsymbol{s_1},\boldsymbol{t_1}),(\boldsymbol{s_2},\boldsymbol{t_2}))w(\boldsymbol{s_1},\boldsymbol{t_1})w(\boldsymbol{s_2},\boldsymbol{t_2})d\boldsymbol{t_1}d\boldsymbol{s_1}d\boldsymbol{t_2}d\boldsymbol{s_2}.
\end{align}
        Specially, for        $w(\boldsymbol{s},\boldsymbol{t})=e^{-\sum_{i=1}^{r_1}a_is_i}\prod_{k=1}^{r_2}t_k^{b_{k}}$
{
\begin{align}
\begin{split}
\label{disperzija2TI}
\widetilde{\sigma}^2_{\boldsymbol{a},\boldsymbol{b}}
       =&\prod_{j=1}^{r_1}
  E\Big(\frac{1}{X_{j}+a_{j}}\Big)^2
  \prod_{k=1}^{r_2}
  E\Big(\frac{1}{Y_{k}+b_{k}+1}\Big)^2\\
  &+(r_1+r_2-1)\prod_{j=1}^{r_1} \Big(E\Big(\frac{1}{X_{j}+a_{j}}\Big)\Big)^2 \prod_{k=1}^{r_2}\Big(E\Big(\frac{1}{Y_{k}+b_{k}+1}\Big)\Big)^2\\
 &-\sum_{d=1}^{r_1}E\Big(\frac{1}{X_{d}+a_{d}}\Big)^2\prod_{j\neq d,j=1}^{r_1}
 \Big(E\Big(\frac{1}{X_{j}+a_{j}}\Big)\Big)^2
 \prod_{k=1}^{r_2}
 \Big(E\Big(\frac{1}{Y_{k}+b_{k}+1}\Big)\Big)^2\\
  &-\sum_{d=1}^{r_2}E\Big(\frac{1}{Y_{d}+b_{d}+1}\Big)^2\prod_{k\neq d,k=1}^{r_2}
 \Big(E\Big(\frac{1}{Y_{k}+b_{k}+1}\Big)\Big)^2
 \prod_{j=1}^{r_1}
 \Big(E\Big(\frac{1}{X_{j}+a_{j}}\Big)\Big)^2.
\end{split}
\end{align}}
    \end{itemize}
\end{theorem}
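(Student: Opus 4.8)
The plan is to follow exactly the route of the proofs of Theorem~\ref{proc} and Corollary~\ref{posledica}, treating $\widetilde{\xi}_n$ as the higher-order analogue of $\xi_n$. The decisive structural difference is that the centering term $\prod_{j=1}^{r_1}L_n(s_j)\prod_{k=1}^{r_2}G_n(t_k)$ is now a product of $r_1+r_2$ sample averages, so $\widetilde{\xi}_n(\boldsymbol{s},\boldsymbol{t})$ is no longer a $V$-statistic of order $2$ but one of order $m=r_1+r_2$. First I would expand this product into a single multiple sum to exhibit $\widetilde{\xi}_n$ as a $V$-statistic of order $m$ with a symmetric kernel $\widetilde{\Phi}$, and verify that under total independence $E(\widetilde{\Phi})=0$ pointwise in $(\boldsymbol{s},\boldsymbol{t})$.

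Next I would compute the first (H\'ajek) projection by linearizing the product around its population value. Under $H_0$ this yields the asymptotically linear representation
\begin{align*}
\sqrt{n}\,\widetilde{\xi}_n(\boldsymbol{s},\boldsymbol{t})
=\frac{1}{\sqrt{n}}\sum_{i=1}^{n} h(\boldsymbol{X}_i,\boldsymbol{Y}_i;\boldsymbol{s},\boldsymbol{t})+r_n(\boldsymbol{s},\boldsymbol{t}),
\end{align*}
where $h$ is the leading term $e^{-\sum_j s_j X_{ij}}\prod_k t_k^{Y_{ik}}-\psi$ minus the $r_1+r_2$ single-coordinate corrections obtained by replacing one factor $L_n(s_j)$ (or $G_n(t_k)$) by its centered empirical increment while freezing the remaining factors at their population values, and $\|r_n\|_w=o_p(1)$. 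A direct computation gives $E(h)=0$, so the first projection is non-degenerate with finite variance, and the limiting covariance of $\sqrt{n}\widetilde{\xi}_n$ is $\widetilde{K}((\boldsymbol{s}_1,\boldsymbol{t}_1),(\boldsymbol{s}_2,\boldsymbol{t}_2))=E\big(h(\boldsymbol{X},\boldsymbol{Y};\boldsymbol{s}_1,\boldsymbol{t}_1)\,h(\boldsymbol{X},\boldsymbol{Y};\boldsymbol{s}_2,\boldsymbol{t}_2)\big)$.

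Evaluating this expectation is the main obstacle. Under total independence every mixed moment factorizes across the $r_1+r_2$ coordinates, with $E(e^{-(s_1+s_2)X_d})=L(s_1+s_2)$ for a matched $X$-coordinate, $E((t_1t_2)^{Y_d})=G(t_1t_2)$ for a matched $Y$-coordinate, and plain products $L(s_1)L(s_2)$, $G(t_1)G(t_2)$ otherwise. Bookkeeping which coordinate is matched across the two copies of $h$—the leading-against-leading term, the product of the two correction sums (which produces the combinatorial weight $r_1+r_2-1$), and the two families of leading-against-correction cross terms (which produce the single sums over $d$)—reassembles exactly the four groups in the stated kernel $\widetilde{K}$. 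I would then verify $\int\!\!\int \widetilde{K}((\boldsymbol{s},\boldsymbol{t}),(\boldsymbol{s},\boldsymbol{t}))\,w\,d\boldsymbol{s}\,d\boldsymbol{t}<\infty$, so that $\widetilde{\xi}\in\mathbb{H}_w$, which secures the weak convergence $\sqrt{n}\widetilde{\xi}_n\Rightarrow\widetilde{\xi}$ to a centered Gaussian process with covariance $\widetilde{K}$.

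The two conclusions then follow by the continuous mapping theorem exactly as in Corollary~\ref{posledica}. Since $n\widetilde{T}^{r_1,r_2}_n=\|\sqrt{n}\widetilde{\xi}_n\|_w^2$ and $x\mapsto\|x\|_w^2$ is continuous, Mercer's theorem together with the Karhunen--Lo\`eve expansion of $\widetilde{\xi}$ yields the weighted $\chi^2$ limit $\sum_i\widetilde{\lambda}_{i,w}Z_i^2$; since $\sqrt{n}\widetilde{I}^{r_1,r_2}_n=\langle\sqrt{n}\widetilde{\xi}_n,1\rangle_w$ is a continuous linear functional of the process, it is asymptotically $\mathcal{N}(0,\widetilde{\sigma}_w^2)$ with $\widetilde{\sigma}_w^2$ the double integral \eqref{disperzija1}. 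Finally, for the explicit weight $w(\boldsymbol{s},\boldsymbol{t})=e^{-\sum a_i s_i}\prod t_k^{b_k}$, specializing \eqref{disperzija1} reduces to integrating each factor of $\widetilde{K}$ separately via $\int_0^\infty e^{-(X+a)s}\,ds=\tfrac{1}{X+a}$ and $\int_0^1 t^{Y+b}\,dt=\tfrac{1}{Y+b+1}$, and interchanging integration with expectation, which turns matched factors into $E\big(\tfrac{1}{X_d+a_d}\big)^2$ and unmatched factors into $\big(E\tfrac{1}{X_j+a_j}\big)^2$, reproducing \eqref{disperzija2TI}.
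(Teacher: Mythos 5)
Your proposal is correct and takes exactly the route the paper intends: the paper states Theorem~\ref{teoremaTI} without proof, saying only that the results follow by analyzing the empirical process $\widetilde{\xi}_n$ analogously to Theorem~\ref{proc} and Corollary~\ref{posledica}, and your argument fills in precisely those details. In particular, your linearization of the product $\prod_j L_n(s_j)\prod_k G_n(t_k)$ around its population value, the resulting H\'ajek projection $h$, and the factorization of mixed moments under total independence do reassemble the stated kernel $\widetilde{K}$ (including the coefficient $r_1+r_2-1$, which arises after cancellation among all three groups of terms rather than from the correction-correction products alone), and the specialization of the weight reproduces \eqref{disperzija2TI}.
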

As in the case for $r_1=r_2=1$ 
we can use a ratio-consistent estimator of $\widetilde{\sigma}^2_{a,b}$ analogous to \eqref{sigmaab}.

\section{Power study}\label{sec:power}

Here we explore the power performance of the proposed tests and compare them with the power of a competitor test from \cite{GFL19}. The warp-speed modification from \cite{GPW} was used with 10,000 replicates for both small sample sizes $(n=20)$ and moderate sample sizes $(n=50)$, with the  level of significance  $\alpha = 5\%$.

The competitor test  from \cite{GFL19}, denoted by $D_n$, is based on the weighted difference between the joint empirical characteristic function and the product of the marginal empirical characteristic functions
\begin{align*}
 D_n^{r_1,r_2}(v) &= n \int_{\mathbb{R}_{+}^{r_1}}\int_{[0,1]^{r_2}}
\bigg( C_n(\boldsymbol{s}, \boldsymbol{t}) - C_n(\boldsymbol{s}) \cdot C_n(\boldsymbol{t}) \bigg)^2 v(\boldsymbol{s}, \boldsymbol{t})\, d\boldsymbol{s}\, d\boldsymbol{t},\\
\widetilde{D}_n^{r_1,r_2}(v)&= n \int_{\mathbb{R}_{+}^{r_1}}\int_{[0,1]^{r_2}}
 \left( C_n(\boldsymbol{s},\boldsymbol{t})-\prod_{i=1}^{r_1} C_n(s_i)\prod_{k=1}^{r_2} C_n(t_k)\right) ^2v(\boldsymbol{s},\boldsymbol{t})d\boldsymbol{s}d\boldsymbol{t}.
\end{align*}

In accordance with the authors' recommendation, we use the weight function 
$v(\boldsymbol{s},\boldsymbol{t})=e^{-0.5\sum_{i=1}^{r_1}\sigma_{i1}^2s_i^2-0.5\sum_{i=1}^{r_2}\sigma_{i2}^2t_i^2}$ and $\sigma_{i1}^2=\sigma_{i2}^2=0.5^2$.

Following \cite{O16mixed}, we consider Gamma distributions $(\gamma(\alpha,\beta))$ for continuous margins (in particular, the notation $E(\beta)$ is used when $\alpha = 1$, i.e. when it is exponential distribution), and Poisson $(P)$, Binomial $(B(N,p))$, and Negative binomial $(NB(m,p))$ distributions for discrete margins.
Their dependence structure is modeled using R-vine copulas, with the following families: Gaussian ($Ga(\theta)$) \cite{JWWD16, LW11, KAD21, BC21}, Clayton ($Cl(\theta)$) \cite{KAD21, BC21, YL22}, Gumbel ($Gu(\theta)$) \cite{YL22, ZK16, HJ12}, and Joe ($Joe(\theta)$) \cite{YL22, HJ12, KR22}. In Tables \ref{power}-\ref{powerM}, Ind. refers to the model with selected margins under the null hypothesis of independence. In multivariate settings when testing independence of two random vectors or total independence, used  R-vine copula structures are presented on Figures \ref{fig:12}, \ref{fig:21} and \ref{fig:61}.  The parameter $\theta_1$ corresponds to the (conditional) association between two variables of the same type at each of the considered levels, whereas the parameters $\theta_2$ and $\theta_3$ describe the (conditional) associations between variables of different types.

\begin{figure}[h!]
  \centering
   \includegraphics[width=0.4\textwidth]{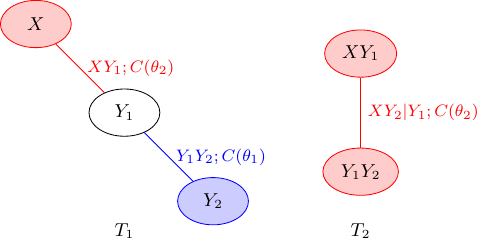}
  \caption{{{Graphical representation of R-vine copula structures $T_1-T_2$ modeling dependencies between variables $X$ and $(Y_1,Y_2)$, with different copula families applied at successive dependency levels}
  }}
  
  \label{fig:12}
\end{figure}

\begin{figure}[h!]
  \centering
   \includegraphics[width=0.4\textwidth]{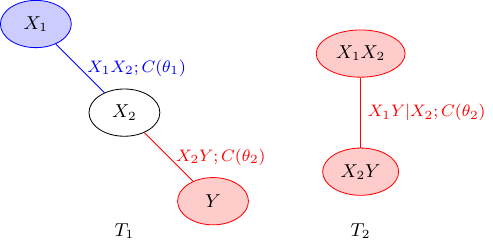}
  \caption{{{Graphical representation of R-vine copula structures $T_1-T_2$ modeling dependencies between variables $(X_1,X_2)$ and $Y$, with different copula families applied at successive dependency levels}
  }}
  
  \label{fig:21}
\end{figure}

\begin{figure}[h!]
  \centering
   \includegraphics[width=0.3\textwidth]{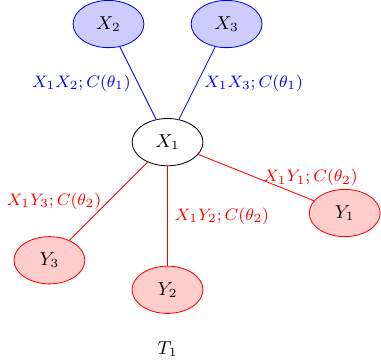}\hspace{1cm}
   \includegraphics[width=0.3\textwidth]{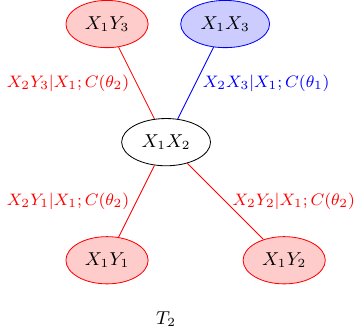}\\
   \includegraphics[width=0.3\textwidth]{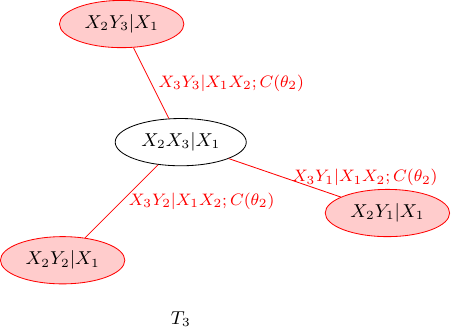}\hspace{1cm}
   \includegraphics[width=0.2\textwidth]{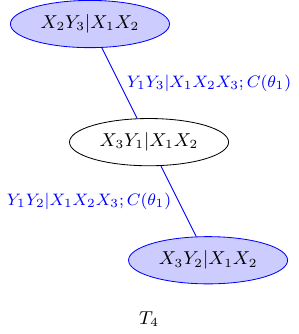}\hspace{1cm}
   \includegraphics[width=0.2\textwidth]{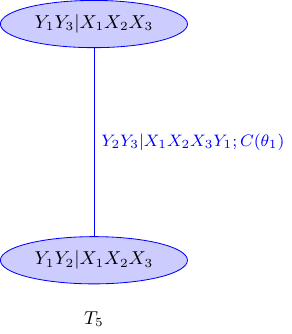}
  \caption{{{Graphical representation of R-vine copula structures $T_1-T_5$ modeling dependencies between the components of vector $(X_1,X_2,X_3,Y_1,Y_2,Y_3)$, with different copula families applied at successive dependency levels}
  }}
  
  \label{fig:61}
\end{figure}


{{For clarity, we highlight power values that are less than or equal to 
$1.96 \times 0.5/\sqrt{10,\!000} = 0.01$ away from the highest. 
In the overall comparison across all tests, dark gray with bold
indicates the highest values, while dark gray without bold
marks values that are within $1\%$ of the maximum. 
For comparisons within individual tests, only light gray is used 
to highlight both the maximum value and those within $1\%$ of it.}}

When it comes to the comparison of different statistics included in the study, from Tables \ref{power}-\ref{powerM}, we can see that $st.I_n$ performs the best, further confirming the advantage of having a simple asymptotic distribution. The competing statistic $D_n$ is never the best-performing.

We can see that the impact of the tuning parameters $\boldsymbol{a}=(\underbrace{a,\cdots,a}_{r_1})$ and $\boldsymbol{b}=(\underbrace{b,\cdots,b}_{r_2})$ is significant. In the bivariate case, for associations modeled by the Clayton copula, smaller values of $a$ and $b$ are more appropriate. Medium values are most suitable for Gaussian copulas, whereas larger values work better for the Gumbel and Joe copulas. Overall, if no prior knowledge about the data is available, we recommend choosing $a = 1$ and $b = 5$. As the dimension increases, the previously observed conclusions can no longer be generalized. In particular, the influence of tuning parameters becomes more evident, and their optimal selection depends not only on the dimensionality but also on the underlying dependence structure. Moreover, it becomes clear that the optimal choice may differ depending on whether the goal is to test independence between vectors or total independence. As a practical compromise, in the absence of prior information about the data, we recommend using $a=1$ and $b=5$ for testing independence between vectors, and $a=1$ and $b=1$ for testing total independence. If there is prior knowledge about the dependence, one can consult Tables \ref{powerMV12_high_tau}-\ref{powerM},


{
It is also of interest to explore the effect of association among the components within each vector (inter-association) on the performance of tests for assessing independence between two vectors. Based on the results presented in this manuscript, as well as additional findings not included here, we conclude that inter-association tends to have a stronger negative impact for larger sample sizes; specifically, as inter-association increases, the power of the tests decreases. This effect becomes even more noticeable with increasing vector dimensions.

}



\begin{table}[!h]
    \centering 
     \resizebox{0.9\textwidth}{!}{
}
    \caption{Rejection rates for testing total independence of random vector with three continuous and three discrete components ($\alpha = 0.05$, $n = 20$, $n = 50$)}
    \label{powerM}
\end{table}

\section{Real data}\label{sec: real}

\subsubsection*{Example 1: Rainfall in Samaru Zaria}

We revisit the rainfall study  \cite[Table 2]{lawal2018changing}, which summarises \emph{annual} totals recorded at the Institute for Agricultural Research in Samaru Zaria, Kaduna State (Nigeria) over the period 1961–2017 (\(n = 57\) years).  
The two variables of interest are $X$—the total annual rainfall (mm)—and $Y$—the annual number of rainy days.  
While the original authors reported an inverse (negative) association between \(X\) and \(Y\), our goal is to formally test the hypothesis of independence against a general alternative of non-independence.  
Unlike \(D_n\) (p-value \(0.24\)), all our tests indicate rejection of the null hypothesis (the obtained \(p\)-values are in the range \(0.014\!-\!0.020\)).

{
\subsubsection*{Example 2: Seizure Counts for Epileptics}

The \textit{Epil Dataset} (\cite{thall1990some}) contains daily counts of epileptic seizures recorded from patients in a clinical study, alongside continuous measurements of physiological variables such as blood pressure and heart rate. 

We aim to test the independence between the count variable representing the number of epileptic seizures (base) and the continuous variable age of the patients (age). Determining whether seizure frequency is related to age is crucial for understanding the underlying factors influencing epilepsy and for tailoring appropriate treatment strategies. 

All tests reject the null hypothesis of independence (the obtained \(p\)-values for our tests are in the range \(0.012\!-\!0.017\), while for $D_n$ is less then $10^{-3}$). This indicates a significant dependence between these variables, suggesting that age plays an important role in seizure frequency. 
}

    \subsubsection*{Example 3: Bike Sharing Data}

The third real–data illustration uses the \textit{Bike Sharing Dataset}, which records daily and hourly counts of rented bicycles from the Washington, D.\,C.\ Capital Bikeshare system during 2011–2012. Alongside rental counts, the data set provides key meteorological variables (temperature, humidity, wind speed) and calendar indicators (season, working day, holiday), making it ideal for assessing how environmental factors influence demand.

Our aim is to test whether bicycle demand is independent of two weather covariates—temperature and wind speed. Let \(X_{1}\) denote temperature, \(X_{2}\) wind speed, and \(Y\) the daily rental count. We consider the null hypothesis of independence between  vector
 $(X_{1}, X_{2})$  and  $Y$,
against the alternative that dependence exists.

To minimize seasonality effects and serial autocorrelation, we draw a simple random sample of \(n = 200\) days from the full daily series and treat these observations as approximately independent. The values are given in Table \ref{tab:data3}. This subset provides a clean basis for applying the proposed independence tests.
All our tests indicate rejection of the null hypothesis, whereas the \(p\)-value for \(D_n\) is 0.62.  
This clearly demonstrates the advantage of the novel statistics.

\begin{table}[h!]
    \centering
       \resizebox{0.9\textwidth}{!}{
   \begin{tabular}{rrrr|rrrr|rrrr}
Instant & Temp & Windspeed & Count & Instant & Temp & Windspeed & Count & Instant & Temp & Windspeed & Count \\
\hline
9 & 0.138333 & 0.36195 & 822  &  245 & 0.643333 & 0.139929 & 4727  &  490 & 0.6275 & 0.162938 & 6296  \\
12 & 0.172727 & 0.304627 & 1162  &  246 & 0.669167 & 0.185325 & 4484  &  491 & 0.621667 & 0.152992 & 6883  \\
14 & 0.16087 & 0.126548 & 1421  &  248 & 0.673333 & 0.212696 & 3351  &  503 & 0.593333 & 0.229475 & 7384  \\
15 & 0.233333 & 0.157963 & 1248  &  255 & 0.644348 & 0.088913 & 4713  &  506 & 0.620833 & 0.254367 & 7129  \\
18 & 0.216667 & 0.146775 & 683  &  257 & 0.673333 & 0.1673 & 4785  &  508 & 0.615 & 0.118167 & 6073  \\
30 & 0.216522 & 0.0739826 & 1096  &  260 & 0.491667 & 0.189675 & 4511  &  511 & 0.68 & 0.14055 & 6734  \\
33 & 0.26 & 0.264308 & 1526  &  262 & 0.549167 & 0.151742 & 4539  &  516 & 0.656667 & 0.134329 & 6855  \\
34 & 0.186957 & 0.277752 & 1550  &  275 & 0.356667 & 0.222013 & 2918  &  517 & 0.68 & 0.195279 & 7338  \\
35 & 0.211304 & 0.127839 & 1708  &  278 & 0.538333 & 0.17725 & 4826  &  519 & 0.583333 & 0.186562 & 8120  \\
38 & 0.271667 & 0.0454083 & 1712  &  282 & 0.540833 & 0.06345 & 5511  &  520 & 0.6025 & 0.184087 & 7641  \\
44 & 0.316522 & 0.260883 & 1589  &  287 & 0.550833 & 0.223883 & 3644  &  523 & 0.554167 & 0.077125 & 7055  \\
49 & 0.521667 & 0.264925 & 2927  &  291 & 0.5325 & 0.110087 & 4748  &  524 & 0.6025 & 0.15735 & 7494  \\
52 & 0.303333 & 0.307846 & 1107  &  293 & 0.475833 & 0.422275 & 4195  &  528 & 0.720833 & 0.207713 & 6664  \\
53 & 0.182222 & 0.195683 & 1450  &  295 & 0.4225 & 0.0926667 & 4308  &  530 & 0.655833 & 0.343279 & 7421  \\
58 & 0.343478 & 0.125248 & 2402  &  297 & 0.463333 & 0.118792 & 4187  &  532 & 0.639167 & 0.176617 & 7665  \\
62 & 0.198333 & 0.225754 & 1685  &  302 & 0.254167 & 0.351371 & 627  &  537 & 0.7825 & 0.113812 & 6211  \\
63 & 0.261667 & 0.203346 & 1944  &  305 & 0.400833 & 0.135571 & 4068  &  547 & 0.765 & 0.161071 & 5687  \\
68 & 0.295833 & 0.22015 & 1891  &  309 & 0.326667 & 0.189062 & 3926  &  548 & 0.815833 & 0.168529 & 5531  \\
69 & 0.389091 & 0.261877 & 623  &  312 & 0.408333 & 0.0690375 & 4205  &  552 & 0.8275 & 0.194029 & 6241  \\
70 & 0.316522 & 0.23297 & 1977  &  318 & 0.53 & 0.306596 & 4486  &  559 & 0.715833 & 0.146775 & 7446  \\
72 & 0.384348 & 0.270604 & 2417  &  319 & 0.53 & 0.199633 & 4195  &  562 & 0.745833 & 0.166667 & 6031  \\
75 & 0.365217 & 0.203117 & 2192  &  322 & 0.274167 & 0.168533 & 3392  &  563 & 0.763333 & 0.164187 & 6830  \\
76 & 0.415 & 0.209579 & 2744  &  323 & 0.329167 & 0.224496 & 3663  &  565 & 0.793333 & 0.137442 & 5713  \\
77 & 0.54 & 0.231017 & 3239  &  324 & 0.463333 & 0.18595 & 3520  &  571 & 0.750833 & 0.211454 & 7592  \\
81 & 0.441667 & 0.22575 & 2703  &  325 & 0.4475 & 0.138054 & 2765  &  572 & 0.724167 & 0.1648 & 8173  \\
83 & 0.285 & 0.243787 & 1865  &  333 & 0.458333 & 0.258092 & 2914  &  574 & 0.781667 & 0.152992 & 6904  \\
88 & 0.3025 & 0.226996 & 2425  &  335 & 0.3125 & 0.220158 & 3727  &  576 & 0.721667 & 0.170396 & 6597  \\
89 & 0.3 & 0.172888 & 1536  &  336 & 0.314167 & 0.100754 & 3940  &  580 & 0.7525 & 0.129354 & 7261  \\
91 & 0.3 & 0.258708 & 2227  &  339 & 0.385833 & 0.0622083 & 3811  &  581 & 0.765833 & 0.215792 & 7175  \\
93 & 0.378333 & 0.182213 & 3249  &  341 & 0.41 & 0.266175 & 705  &  583 & 0.769167 & 0.290421 & 5464  \\
99 & 0.3425 & 0.133083 & 2455  &  343 & 0.290833 & 0.0827167 & 3620  &  587 & 0.755833 & 0.1561 & 7286  \\
100 & 0.426667 & 0.146767 & 2895  &  346 & 0.238333 & 0.06345 & 3310  &  590 & 0.700833 & 0.122512 & 6544  \\
103 & 0.4125 & 0.250617 & 2162  &  350 & 0.375 & 0.260575 & 3577  &  591 & 0.720833 & 0.136212 & 6883  \\
108 & 0.5125 & 0.163567 & 3429  &  354 & 0.385833 & 0.0615708 & 3750  &  593 & 0.706667 & 0.169771 & 7347  \\
109 & 0.505833 & 0.157971 & 3204  &  356 & 0.423333 & 0.047275 & 3068  &  595 & 0.723333 & 0.231354 & 7148  \\
111 & 0.459167 & 0.325258 & 4189  &  360 & 0.321739 & 0.239465 & 1317  &  600 & 0.6675 & 0.0702833 & 7375  \\
114 & 0.581667 & 0.192175 & 4191  &  362 & 0.29913 & 0.293961 & 2302  &  604 & 0.653333 & 0.228858 & 5255  \\
115 & 0.606667 & 0.185333 & 4073  &  367 & 0.273043 & 0.329665 & 1951  &  618 & 0.61 & 0.224496 & 8227  \\
118 & 0.6175 & 0.320908 & 4058  &  372 & 0.393333 & 0.174758 & 4521  &  619 & 0.583333 & 0.258713 & 7525  \\
120 & 0.4725 & 0.235075 & 5312  &  379 & 0.18 & 0.187183 & 2493  &  620 & 0.5775 & 0.0920542 & 7767  \\
139 & 0.530833 & 0.108213 & 4575  &  382 & 0.373043 & 0.34913 & 2935  &  631 & 0.65 & 0.283583 & 8395  \\
141 & 0.6025 & 0.12065 & 5805  &  383 & 0.303333 & 0.415429 & 3376  &  641 & 0.590833 & 0.104475 & 4639  \\
142 & 0.604167 & 0.148008 & 4660  &  386 & 0.173333 & 0.222642 & 1301  &  647 & 0.383333 & 0.189679 & 5478  \\
146 & 0.708333 & 0.199642 & 4677  &  390 & 0.294167 & 0.161071 & 4270  &  648 & 0.446667 & 0.1903 & 6392  \\
158 & 0.7075 & 0.187808 & 4833  &  391 & 0.341667 & 0.0733958 & 4075  &  649 & 0.514167 & 0.187821 & 7691  \\
160 & 0.808333 & 0.149883 & 3915  &  398 & 0.399167 & 0.187187 & 3761  &  661 & 0.4875 & 0.0814833 & 7058  \\
166 & 0.626667 & 0.167912 & 5180  &  400 & 0.264167 & 0.121896 & 2832  &  664 & 0.55 & 0.124375 & 7359  \\
174 & 0.728333 & 0.238804 & 4790  &  402 & 0.282609 & 0.1538 & 3784  &  670 & 0.3575 & 0.166667 & 5566  \\
184 & 0.716667 & 0.228858 & 4649  &  403 & 0.354167 & 0.147379 & 4375  &  671 & 0.365833 & 0.157346 & 5986  \\
186 & 0.746667 & 0.126258 & 4665  &  404 & 0.256667 & 0.133721 & 2802  &  677 & 0.295833 & 0.304108 & 5035  \\
189 & 0.709167 & 0.225129 & 4040  &  410 & 0.319167 & 0.141179 & 3922  &  682 & 0.485 & 0.173517 & 6269  \\
190 & 0.733333 & 0.167912 & 5336  &  412 & 0.316667 & 0.091425 & 3005  &  684 & 0.289167 & 0.199625 & 5495  \\
191 & 0.7475 & 0.183471 & 4881  &  413 & 0.343333 & 0.205846 & 4154  &  686 & 0.345 & 0.171025 & 5698  \\
197 & 0.686667 & 0.208342 & 5923  &  418 & 0.395833 & 0.234471 & 4773  &  697 & 0.291667 & 0.237562 & 3959  \\
198 & 0.719167 & 0.245033 & 5302  &  419 & 0.454167 & 0.190913 & 5062  &  700 & 0.298333 & 0.0584708 & 5668  \\
201 & 0.768333 & 0.113817 & 4332  &  428 & 0.414167 & 0.161079 & 4066  &  701 & 0.298333 & 0.0597042 & 5191  \\
204 & 0.849167 & 0.131221 & 3285  &  432 & 0.404167 & 0.345779 & 4916  &  702 & 0.3475 & 0.124379 & 4649  \\
207 & 0.771667 & 0.200258 & 4590  &  436 & 0.361739 & 0.222587 & 4911  &  704 & 0.475833 & 0.174129 & 6606  \\
214 & 0.783333 & 0.20585 & 4845  &  438 & 0.565 & 0.23695 & 5847  &  705 & 0.438333 & 0.324021 & 5729  \\
216 & 0.71 & 0.19715 & 4576  &  443 & 0.4725 & 0.126883 & 5892  &  706 & 0.255833 & 0.174754 & 5375  \\
219 & 0.7425 & 0.201487 & 3785  &  444 & 0.545 & 0.162317 & 6153  &  714 & 0.281667 & 0.131229 & 5611  \\
221 & 0.775 & 0.151121 & 4602  &  450 & 0.4375 & 0.220775 & 4996  &  715 & 0.324167 & 0.10635 & 5047  \\
228 & 0.700833 & 0.236329 & 4725  &  458 & 0.433913 & 0.312139 & 5936  &  716 & 0.3625 & 0.100742 & 3786  \\
231 & 0.685 & 0.139308 & 4153  &  464 & 0.5 & 0.232596 & 5169  &  720 & 0.33 & 0.132463 & 4128  \\
233 & 0.710833 & 0.248754 & 3873  &  469 & 0.4425 & 0.155471 & 6398  &  721 & 0.326667 & 0.374383 & 3623  \\
234 & 0.691667 & 0.27675 & 4758  &  478 & 0.396667 & 0.344546 & 1027  &  724 & 0.231304 & 0.0772304 & 920  \\
244 & 0.655 & 0.141796 & 5115  &  489 & 0.56 & 0.133696 & 6421  &   &  &  &   \\
\hline
\end{tabular}}
    \caption{Sample of 200 element from Bike Sharing Data}
    \label{tab:data3}
\end{table}

 \subsubsection*{Example 4: Horseshoe Crab Mating}
 
The dataset \texttt{crabs} (available  in R-package MASS) contains information on female horseshoe crabs, including physical characteristics such as weight, carapace width, color, and spine length. 
Larger, healthier females tend to attract more satellite males, suggesting dependence between female traits and satellite behavior.

 Here we  test the independece of  $(X_1,X_2)$ and $Y$, where  $X_1$ is  a carapace width, $X_2$ is a crab's weight and $Y$ is  the number of attracted males satellites.  P-values  of all statistics were less than $10^{-3}$ indicating  a strong dependence among the considered variables.

\section{Conclusion}
In this paper, we introduced two novel and powerful classes of flexible test statistics designed for assessing dependence among variables of different types—a challenge that frequently arises in practical applications. A key advantage of the bivariate version of the proposed statistics is its natural extension to multivariate settings, allowing for both the testing of independence between random vectors and the assessment of total independence among all components. In these settings, the novel statistics often outperformed existing competitors. The strong empirical performance suggests that extending these methods to high-dimensional scenarios may be a promising direction for future research.

\section*{Acknowledgments}

The work of B. Milošević and M. Cuparić is supported by the Ministry of Science, Technological Development and Innovations of the Republic of Serbia (the contract 451-03-136/2025-03/200104) and also supported by the COST action CA21163 - Text, functional and other high-dimensional data in econometrics: New models, methods, applications (HiTEc).


\begin{appendices}

\end{appendices}

\end{document}